\DeclareMathOperator*{\minimize}{minimize}
\newcommand{\ba}{\begin{array}}
	\newcommand{\ea}{\end{array}}
\newcommand{\be}{\begin{equation}}
\newcommand{\ee}{\end{equation}}
\newcommand{\bea}{\begin{eqnarray}}
\newcommand{\eea}{\end{eqnarray}}
\newcommand{\bean}{\begin{eqnarray*}}
	\newcommand{\eean}{\end{eqnarray*}}
\newcommand{\bc}{\begin{center}}
	\newcommand{\ec}{\end{center}}
\title{\LARGE \bf Data-Driven Min-Max MPC for Linear Systems}
\author{Yifan~Xie, Julian~Berberich, Frank Allg\"{o}wer
	\thanks{F. Allg\"{o}wer is thankful that his work was funded by Deutsche Forschungsgemeinschaft (DFG, German Research
Foundation) under Germany’s Excellence Strategy - EXC 2075 - 390740016 and under grant 468094890.
F. Allg\"{o}wer acknowledges the support by the Stuttgart
Center for Simulation Science (SimTech).
The authors thank the International Max Planck Research School
for Intelligent Systems (IMPRS-IS) for supporting Yifan Xie.
	}
	\thanks{Yifan~Xie, Julian~Berberich, Frank Allg\"{o}wer are with the Institute for Systems Theory and Automatic Control, University of Stuttgart, 70550 Stuttgart, Germany.
		{\tt\small  \{yifan.xie, julian.berberich, frank.allgower\}@ist.uni-stuttgart.de}. }
}
\newtheorem{mythm}{Theorem}
\newtheorem{mylem}{Lemma}
\newtheorem{remark}{Remark}
\newtheorem{assum}{Assumption}
\begin{document}
	
	\maketitle

\begin{abstract}
Designing data-driven controllers in the presence of noise is an important research problem, in particular when guarantees on stability, robustness, and constraint satisfaction are desired.
In this paper, we propose a data-driven min-max model predictive control (MPC) scheme to design state-feedback controllers from noisy data for unknown linear time-invariant (LTI) system.
The considered min-max problem minimizes the worst-case cost over the set of system matrices consistent with the data.
We show that the resulting optimization problem can be reformulated as a semidefinite program (SDP).
By solving the SDP, we obtain a state-feedback control law that stabilizes the closed-loop system and guarantees input and state constraint satisfaction.
A numerical example demonstrates the validity of our theoretical results.
\end{abstract}

\section{Introduction}\label{sec:1}
Model Predictive Control (MPC) is an advanced control technique that can handle nonlinear multi-input multi-output systems with constraints \cite{rawlings2017model}.
The basic idea of MPC is to solve an open-loop optimal control problem at each sampling time, which uses the current state as the initial condition and system dynamics to predict future open-loop states.

Ensuring robust constraint satisfaction in the presence of noise or model uncertainty is challenging.
Several MPC methods have been proposed to deal with this issue.
Tube-based robust MPC \cite{mayne2005robust,mayne2011tube}, where a constraint tightening is included in the MPC optimization problem, ensures that all possible realizations of the state trajectory lie in an uncertainty tube around a nominal system.
This approach typically assumes that the nominal system is known, whereas the noise is unknown and bounded.
In \cite{lorenzen2019update} and \cite{lu2021robust}, model uncertainty is characterized within a set, and additional measurements are leveraged to reduce model uncertainty and enhance the performance of the tube-based MPC scheme.
Furthermore, min-max MPC can effectively address scenarios involving model uncertainty, as discussed in \cite{kothare1996robust,bemporad2003minmax}, and \cite{scokaert1998minmaxfeedback}.
The goal of min-max MPC is to design control inputs that minimize the worst-case cost w.r.t. disturbances or uncertainties.
State-feedback control laws are commonly employed in the min-max MPC framework to decrease the computational complexity \cite{kothare1996robust, bemporad2003minmax, scokaert1998minmaxfeedback}.

Standard model-based control techniques, in particular standard MPC schemes, rely on a priori knowledge of system models that are either identified from measured data by system identification methods\cite{ljung1999system} or derived based on first principles.
In contrast, data-driven control approaches can design controllers directly from the available data.
The Fundamental Lemma proposed by
Willems et al. \cite{willems2005note} has led to control strategies in a behavioral setting \cite{markovsky2021review}, and more recently in the context of MPC \cite{coulson2019data,berberich2021guarantees}.
For example, \cite{berberich2021guarantees} proposes a robust data-driven MPC scheme and proves theoretical guarantees in case of noisy data.
The behavioral framework requires persistently exciting data, enabling the unique representation of the system from the data in the noise-free scenarios.
On the other hand, it has been shown in the informativity framework that the data need not to be sufficiently informative to uniquely identify the system \cite{van2020data}.
In this framework, control strategies are proposed to stabilize the system based on a representation of the set of system matrices consistent with the data \cite{van2020data, van2020noisy, van2023informativity, persis2020formulas}.
For example, \cite{van2020noisy} proposes a controller design method directly from noisy input-state data.
However, the design of MPC schemes, known for their effectiveness in handling constraints, remains unaddressed in this framework.

In this paper, we present a data-driven min-max MPC scheme which uses noisy data to control linear time-invariant (LTI) systems with unknown system matrices.
Our approach relies on a representation of the system matrices consistent with a sequence of noisy input-state data that was employed, e.g., by \cite{berberich2023combining,bisoffi2021trade,van2020noisy}.
We reformulate the data-driven min-max MPC problem with input and state constraints to a semidefinite program (SDP) that yields a state-feedback control law.
Further, we show that the proposed data-driven min-max MPC guarantees closed-loop properties including recursive feasibility, constraint satisfaction and exponential stability.

The remainder of this paper is organized as follows.
In Section~\ref{sec:2}, we introduce necessary preliminaries.
In Section~\ref{sec:3}, we propose the data-driven min-max MPC scheme and show that the scheme ensures recursive feasibility and exponential stability.
We apply the developed scheme to a numerical example in Section~\ref{sec:4}.
Finally, we conclude the paper in Section~\ref{sec:5}.
	
\section{Preliminaries}\label{sec:2}
Let $\mathbb{I}_{[a, b]}$ denote the set of integers in the interval $[a, b]$ and  $\mathbb{I}_{\geq 0}$ denote the set of nonnegative integers.
For a matrix $P$, we write $P\succ 0$ if $P$ is positive definite and $P\succeq 0$ if $P$ is positive semi-definite.
For a vector $x$ and a matrix $P\succ 0$, we write $\|x\|_P=\sqrt{x^\top Px}$.
For matrices $A$ and $B$ of compatible dimensions, we abbreviate $ABA^\top$ to $AB\begin{bmatrix}\star\end{bmatrix}^\top$.

\subsection{System representation}\label{sec:2.1}
We consider an unknown discrete-time LTI system
\begin{equation}\label{system}
x_{t+1}=A_sx_t+B_su_t+\omega_t,
\end{equation}
where $x_t\in\mathbb{R}^n$ denotes the state, $u_t\in\mathbb{R}^m$ denotes the input, and $\omega_t\in\mathbb{R}^n$ denotes the unknown noise for $t\in\mathbb{N}$.
The matrices $A_s\in\mathbb{R}^{n\times n}$ and $B_s\in\mathbb{R}^{n\times m}$ are unknown.
We define a sequence of input, noise and corresponding state measurements, which is denoted by
\begin{equation}\label{trajectory}\nonumber
\begin{aligned}
U_{-}&:=\begin{bmatrix}u_0 & u_1 &\ldots &u_{T-1}\end{bmatrix},\\
W_{-}&:=\begin{bmatrix}\omega_0 &\omega_1 &\ldots &\omega_{T-1}\end{bmatrix},\\
X&:=\begin{bmatrix}x_0 & x_1 &\ldots &x_T\end{bmatrix}.
\end{aligned}
\end{equation}
Throughout this paper, we assume that data of the form $X$ and $U_{-}$ are available, whereas $W_{-}$ is unknown.
The noise $\omega_t$ should satisfy the following instantaneous constraint.
\begin{assum}\label{assumption1}
For all $t\in \mathbb{N}$, the noise $\omega_t\in\mathbb{R}^n$ satisfies $\|\omega_t\|_2^2\leq\epsilon$
for a known bound $\epsilon\geq 0$.
\end{assum}

We define the set of system matrices $(A, B)$ consistent with the data $x_i, u_i, x_{i+1}, i\in\mathbb{N}$ by
\begin{equation}\nonumber
\Sigma_i:=\left\{\!(A, B):
\eqref{system} \text{ holds for some }\omega_i
\text{ satisfying } \|\omega_i\|_2^2\leq \epsilon
\right\}.
\end{equation}
This set includes all system matrices for which there exists a noise satisfying Assumption \ref{assumption1} and the system dynamics \eqref{system}.
We proceed as in  \cite{van2020noisy, bisoffi2021trade, berberich2023combining} to derive a data-driven parametrization of the system matrices.
The system dynamics \eqref{system} can be rewritten as
\begin{equation}\label{omega}\nonumber
\omega_i=x_{i+1}-A_sx_i-B_su_i,
\end{equation}
with which the set of system matrices $\Sigma_i$ can be equivalently characterized by the following quadratic matrix inequality
\begin{equation}\label{sigma}\nonumber
\Sigma_i=\left\{\!\!(A, B)\!:\!\begin{bmatrix}
I \!\!&A \!\!&B
\end{bmatrix}\!\!
\begin{bmatrix}
I &x_{i+1}\\
0 &-x_i\\
0 &-u_i
\end{bmatrix}
\!
\begin{bmatrix}
\epsilon I &0\\
0 &-I
\end{bmatrix}
\!
\begin{bmatrix}
\star
\end{bmatrix}^\top\!\succeq\! 0\!\right\}.
\end{equation}
The set of system matrices consistent with the sequence of input-state measurements $( U_-,X)$ is defined by
\begin{equation}\nonumber
\mathcal{C}=\bigcap_{i=0}^{T-1}\Sigma_i.
\end{equation}
We can characterize $\mathcal{C}$ by the following quadratic matrix inequality \cite{berberich2023combining, bisoffi2021trade}
\begin{equation}\label{C}
\mathcal{C}=\left\{\!\!(A, B)\!:\!\!\!
\begin{gathered}
\begin{bmatrix}
I \!\!&A \!\!&B
\end{bmatrix}
\!\Pi(\tau)\!
\begin{bmatrix}
I \!\!&A \!\!&B
\end{bmatrix}^\top\!\succeq\! 0,  \\
\forall \tau\!=\!(\tau_0, \ldots, \tau_{T-1}), \tau_i\!\geq\! 0, i\in\mathbb{I}_{[0, T-1]}
\end{gathered}\!
\right\}\!,
\end{equation}
where
\begin{equation}\nonumber
\Pi(\tau)=\sum_{i=0}^{T-1}\tau_i
\begin{bmatrix}
I &x_{i+1}\\
0 &-x_i\\
0 &-u_i
\end{bmatrix}
\!\!
\begin{bmatrix}
\epsilon I &0\\
0 &-I
\end{bmatrix}
\begin{bmatrix}
I &x_{i+1}\\
0 &-x_i\\
0 &-u_i
\end{bmatrix}^\top.
\end{equation}

\subsection{Problem setup}
In this paper, we employ data-driven min-max MPC for the unknown system $x_{t+1}=A_sx_t+B_su_t$ to stabilize the origin, while the input and state satisfy given constraints.
As explained in Section~\ref{sec:2.1}, the offline measurements $(U_{-}, X)$ are affected by noise satisfying the instantaneous constraint in Assumption \ref{assumption1}.
On the other hand, the data collected online during closed-loop operation are assumed to be noise-free.
This is assumed for simplicity, to avoid an additional maximization w.r.t. the noise in the min-max MPC problem.
Extending the proposed framework to handling online noise is an interesting issue for future research.
In Section \ref{sec:5}, we show with a numerical example that the proposed approach produces reliable results also in the presence of online noise.

We consider ellipsoidal constraints on the input and the state, i.e.,
\begin{equation}\nonumber
        \|u_t\|_{S_u}\leq 1, \|x_t\|_{S_x}\leq 1, \forall t\in \mathbb{N},
\end{equation}
where $S_u\succ 0$ and  $S_x\succeq 0$.
The centers of the ellipsoids for the input and state constraints are at the origin, but our results can be adapted for non-zero centers.
In order to stabilize the origin, we define the quadratic stage cost function
\begin{equation}\nonumber
l(u, x)=\|u\|_R^2+\|x\|_Q^2,
\end{equation}
where $R, Q\succ 0$.
The following results can be adapted for non-zero equilibria $(u^s, x^s)\neq (0, 0)$.

\section{ Data-Driven Min-Max MPC}\label{sec:3}
In Section~\ref{sec:3.1}, we define a general data-driven min-max MPC problem with input and state constraints.
In Section~\ref{sec:3.2}, we restrict the optimization to state-feedback control laws, which allows to reformulate the data-driven min-max MPC problem as an SDP.
The state-feedback control law at each time step can be obtained from a receding-horizon algorithm, which is proposed in Section~\ref{sec:3.3}.
Finally, we prove recursive feasibility, constraint satisfaction and exponential stability for the closed-loop system in Section~\ref{sec:3.4}.

\subsection{Min-max MPC problem}\label{sec:3.1}
At time $t$, given an initial state $x_t$, the data-driven min-max MPC optimization problem is formulated as follows:
\vspace{-10pt}
\begin{subequations}\label{mpc:nominal}
\begin{align}
J_\infty^*(x_t):=&\min_{\bar{u}(t)}\max_{(A, B)\in\mathcal{C}}\sum_{k=0}^{\infty}l(\bar{u}_k(t), \bar{x}_k(t))\label{mpc:nominal_obj}\\
\text{s.t.}\quad &\bar{x}_{k+1}(t)=A\bar{x}_k(t)+B\bar{u}_k(t),\label{mpc:nominal_con1}\\
&\bar{x}_0(t)=x_t,\label{mpc:nominal_con2}\\
&\|\bar{u}_k(t)\|_{S_u}\leq 1,\forall t\in \mathbb{N},\label{mpc:nominal_con3}\\
&\|\bar{x}_k(t)\|_{S_x}\leq 1, \forall (A, B)\in\mathcal{C},t\in \mathbb{N}.\label{mpc:nominal_con4}
\end{align}
\end{subequations}
The objective function is a minimization of the worst-case cost over all consistent system matrices in $\mathcal{C}$ by adapting the control input $\bar{u}_k(t), \forall k\in\mathbb{N}$.
In the optimization problem, $\bar{x}_{k}(t)$ and $\bar{u}_k(t)$ are the predicted state and control input at time $t+k$ based on the measurement at time $t$.
The prediction model employs the system matrices consistent with the data trajectory in constraint \eqref{mpc:nominal_con1}.
In constraint \eqref{mpc:nominal_con2}, we initialize $\bar{x}_0(t)$ as the state measurement at time $t$.
We consider that the input and state should lie in the ellipsoidal constraints in \eqref{mpc:nominal_con3} and \eqref{mpc:nominal_con4}.
The state constraint \eqref{mpc:nominal_con4} must be satisfied for any states predicted using any system matrices in $\mathcal{C}$.
In order to effectively address this problem and obtain a tractable solution, we limit our focus to a state-feedback control law of the form $u_t=F_tx_t$, where $F_t\in\mathbb{R}^{m\times n}$.

\subsection{Reformulation as an SDP}\label{sec:3.2}
In this subsection, we first prove that the state-feedback gain $F_t$ that minimizes the upper bound on the optimal cost of the min-max MPC problem \eqref{mpc:nominal_noconstraint} can be obtained by solving the SDP \eqref{sdp:nominal_no}.
Then, we reformulate the input and state constraints as linear matrix inequalities (LMIs).

We first neglect the input and state constraints \eqref{mpc:nominal_con3} and \eqref{mpc:nominal_con4} and obtain the state-feedback gain for the following data-driven min-max MPC problem
\begin{subequations}\label{mpc:nominal_noconstraint}
\begin{align}
J_\infty^*(x_t):=&\min_{\bar{u}(t)}\max_{(A, B)\in\mathcal{C}}\sum_{k=0}^{\infty}l(\bar{u}_k(t), \bar{x}_k(t))\label{mpc:nominal_noconstraint_obj}\\
\text{s.t.}\quad &\bar{x}_{k+1}(t)=A\bar{x}_k(t)+B\bar{u}_k(t),\label{mpc:nominal_noconstraint_con1}\\
&\bar{x}_0(t)=x_t.\label{mpc:nominal_noconstraint_con2}
\end{align}
\end{subequations}
The method to reformulate the input and state constraints \eqref{mpc:nominal_con3}-\eqref{mpc:nominal_con4} will be proposed later.

Our goal is to derive an upper bound on the worst-case cost over the set $\mathcal{C}$ and then to find a state-feedback control law to minimize this upper bound.
In order to derive the upper bound of the worst-case cost over all system matrices in $\mathcal{C}$, we define a quadratic function $V(x)=x^\top Px$ for $x\in \mathbb{R}^n$, where $P\succ 0$.
Suppose $V$ satisfies the following inequality for all states and inputs $\bar{x}_k(t), \bar{u}_k(t), k\in\mathbb{N}$ predicted by the system dynamics \eqref{mpc:nominal_noconstraint_con1} with any $(A, B)\in\mathcal{C}$
\begin{equation}\label{Vconstraint}
V(\bar{x}_{k+1}(t))-V(\bar{x}_k(t))\leq -l(\bar{u}_k(t), \bar{x}_k(t)).
\end{equation}
To ensure that the cost in equation \eqref{mpc:nominal_noconstraint_obj} is finite, we must have $\lim\limits_{k\to\infty}\bar{x}_k(t)=0$.
Therefore, we have $\lim\limits_{k\to\infty}V(\bar{x}_k(t))=0$.
Summing the inequality \eqref{Vconstraint} from $k=0$ to $k=T$ along an arbitrary trajectory and letting $T\to \infty$, we obtain
\begin{equation}\label{Vsum}
-V(\bar{x}_0(t))\leq -\sum_{k=0}^{\infty}l(\bar{u}_k(t), \bar{x}_k(t)).
\end{equation}
Since $x_0(t)=x_t$ and the inequality \eqref{Vsum} holds for any matrices $(A, B)\in\mathcal{C}$, it also holds for the worst-case value, i.e.,
\begin{equation}\label{worstV}
    \max_{(A, B)\in\mathcal{C}}\sum_{k=0}^{\infty}l(\bar{u}_k(t), \bar{x}_k(t))\leq V(x_t).
\end{equation}
This provides an upper bound on the cost \eqref{mpc:nominal_noconstraint_obj}.
The above method to derive the upper bound on the worst-case cost, i.e., \eqref{Vconstraint}-\eqref{worstV}, is inspired by the existing LMI-based min-max MPC approach in \cite{kothare1996robust}.

The goal of our data-driven min-max MPC problem is to synthesize a state-feedback control law $u_t=F_tx_t$  to minimize the upper bound $V(x_t)$ satisfying \eqref{Vconstraint} for any $(A, B)\in\mathcal{C}$.
As the following theorem shows, this is possible based on LMIs.

\begin{mythm}\label{theorem1}
Suppose that there exist $\gamma>0$, $H\in\mathbb{R}^{n\times n}$, $ L\in\mathbb{R}^{m\times n}$, $\tau\in\mathbb{R}^T$ such that the inequalities \eqref{sdp:nominal_no} hold
\begin{subequations}\label{sdp:nominal_no}
\begin{align}
    & \begin{bmatrix}1 &x_t^\top\\
x_t &H\end{bmatrix}\succeq 0, \label{sdp:nominal_no_con1}\\
    &\begin{bmatrix}
        \begin{bmatrix}
            -H &0\\
            0 &0
        \end{bmatrix}+\Pi(\tau) &
        \begin{bmatrix}
            0\\
            H\\
            L
        \end{bmatrix}
        & 0\\
        \begin{bmatrix}
            0 &H &L^\top
        \end{bmatrix} &-H &\Phi^\top\\
        0 &\Phi &-\gamma I
    \end{bmatrix}\prec 0, \label{sdp:nominal_no_con2}\\
    &\tau=(\tau_0, \ldots, \tau_{T-1}), \tau_i\geq 0, \forall i\in\mathbb{I}_{[0, T-1]},\label{sdp:nominal_no_con3}
\end{align}
\end{subequations}
where $\Phi=\begin{bmatrix}M_R L\\ M_Q H\end{bmatrix}$ and $M_R^\top M_R=R$, $M_Q^\top M_Q=Q$.
Then $\gamma$ is an upper bound on the optimal cost of \eqref{mpc:nominal_noconstraint}.
Applying the state-feedback control $u_t=Fx_t$ with  $F=LH^{-1}$ to the system \eqref{system} leads to a cost that is guaranteed to be at most $\gamma$.
\end{mythm}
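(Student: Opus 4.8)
The plan is to show that feasibility of the LMIs \eqref{sdp:nominal_no} certifies the Lyapunov-type construction of Section~\ref{sec:3.2} with the specific choice $P=\gamma H^{-1}$ and $F=LH^{-1}$. Concretely, I would prove two things: first, that the decrease inequality \eqref{Vconstraint} holds along every predicted trajectory generated by \emph{any} $(A,B)\in\mathcal{C}$ under the feedback $u=Fx$; and second, that $V(x_t)=x_t^\top P x_t\leq\gamma$. Granting these, the chain \eqref{Vconstraint}--\eqref{worstV} already derived in the text immediately yields $\max_{(A,B)\in\mathcal{C}}\sum_{k=0}^{\infty}l(\bar u_k(t),\bar x_k(t))\leq V(x_t)\leq\gamma$, which is exactly the claimed bound on the optimal cost of \eqref{mpc:nominal_noconstraint}. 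Note first that since $-H$ appears as a diagonal block of the negative definite matrix in \eqref{sdp:nominal_no_con2}, we have $H\succ 0$, so $P=\gamma H^{-1}\succ 0$ is well defined.

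The core step is to turn the robust decrease requirement, a matrix inequality that must hold for all $(A,B)\in\mathcal{C}$, into the single LMI \eqref{sdp:nominal_no_con2}. Writing the closed-loop matrix as $(A+BF)H=AH+BL=\begin{bmatrix}I&A&B\end{bmatrix}\begin{bmatrix}0\\H\\L\end{bmatrix}$, I would apply the congruence transformation $\mathrm{diag}\!\left(\begin{bmatrix}I&A&B\end{bmatrix},I,I\right)$ to \eqref{sdp:nominal_no_con2}. This collapses the top-left block to $-H+\begin{bmatrix}I&A&B\end{bmatrix}\Pi(\tau)\begin{bmatrix}I&A&B\end{bmatrix}^\top$ and the $(1,2)$ block to $AH+BL$. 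By the characterization \eqref{C} of $\mathcal{C}$, for every $(A,B)\in\mathcal{C}$ and every $\tau\geq 0$ the term $\begin{bmatrix}I&A&B\end{bmatrix}\Pi(\tau)\begin{bmatrix}I&A&B\end{bmatrix}^\top$ is positive semidefinite; since it is added to the top-left corner, removing it can only decrease the matrix, so the resulting block matrix with $-H$ in the top-left corner is still negative semidefinite. This is precisely where the data enter: the multiplier $\tau$ implements an S-procedure relaxation that renders the single LMI sufficient for robustness over all of $\mathcal{C}$.

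It then remains to eliminate the auxiliary blocks by two successive Schur complements. Taking the Schur complement with respect to the $-\gamma I$ block reintroduces $\tfrac1\gamma\Phi^\top\Phi=\tfrac1\gamma(L^\top R L+HQH)$, and a second Schur complement with respect to the remaining $-H$ block produces the quadratic term $\big((A+BF)H\big)^\top H^{-1}\big((A+BF)H\big)$. After a congruence by $H^{-1}$ and multiplication by $\gamma$, the resulting inequality is $(A+BF)^\top P(A+BF)-P+F^\top RF+Q\preceq0$ for all $(A,B)\in\mathcal{C}$; pre- and post-multiplying by $\bar x_k(t)$ and using $\bar x_{k+1}(t)=(A+BF)\bar x_k(t)$ and $\bar u_k(t)=F\bar x_k(t)$ gives exactly \eqref{Vconstraint}. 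Finally, the Schur complement of \eqref{sdp:nominal_no_con1} is $1-x_t^\top H^{-1}x_t\geq0$, i.e.\ $V(x_t)=\gamma x_t^\top H^{-1}x_t\leq\gamma$, which closes the argument.

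I expect the congruence-plus-S-procedure step to be the main obstacle, for two reasons: one must identify the correct scaling $P=\gamma H^{-1}$ (so that the cost weights appear divided by $\gamma$ in the Schur block, matching $\Phi$ and $-\gamma I$) and keep careful track of the transpose placement so that the second Schur complement yields $(A+BF)^\top P(A+BF)$ rather than its transpose. The remaining manipulations are routine Schur-complement bookkeeping, and the convergence property $\lim_{k\to\infty}\bar x_k(t)=0$ needed to sum \eqref{Vconstraint} is already supplied in the text preceding the theorem.
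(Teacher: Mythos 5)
Your proposal is correct and follows essentially the same route as the paper's proof: the same definitions $P=\gamma H^{-1}$, $F=LH^{-1}$, the same S-procedure use of the multiplier $\tau$ against the characterization \eqref{C} of $\mathcal{C}$, and the same pair of Schur complements linking \eqref{sdp:nominal_no_con2} to the robust decrease condition $(A+BF)^\top P(A+BF)-P+F^\top RF+Q\prec 0$, with \eqref{sdp:nominal_no_con1} supplying $V(x_t)\leq\gamma$. The only difference is cosmetic—you run the argument from the LMI to the decrease condition via congruence with $\mathrm{diag}\bigl(\begin{bmatrix}I&A&B\end{bmatrix},I,I\bigr)$ and dropping the positive semidefinite $\Pi(\tau)$ term, while the paper writes the chain in the synthesis direction—and your explicit extraction of $H\succ 0$ from the diagonal block of \eqref{sdp:nominal_no_con2} is a small but welcome addition of rigor.
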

\begin{proof}
As discussed in \eqref{Vconstraint}-\eqref{worstV}, the quadratic function $V(x_t)=x_t^\top Px_t$ with $P\succ 0$ is an upper bound on the optimal cost of \eqref{mpc:nominal_noconstraint}.
Suppose $x_t^\top Px_t\leq \gamma$ holds and define $H=\gamma P^{-1}\succ0$.
Using the Schur complement, $x_t^\top Px_t\leq \gamma$ is equivalent to the inequality \eqref{sdp:nominal_no_con1}.

Additionally, $V$ is required to satisfy the inequality \eqref{Vconstraint} for any $(A, B)\in\mathcal{C}, k\in\mathbb{N}$.
We provide sufficient conditions for inequality \eqref{Vconstraint} based on LMIs.
By substituting $\bar{u}_k(t)=F\bar{x}_k(t)$, the inequality \eqref{Vconstraint} holds for all $\bar{x}_k(t), \bar{u}_k(t), k\in\mathbb{N}$ and $ (A,B)\in \mathcal{C}$ if
\begin{equation}\label{P}
    (A+BF)^\top P(A+BF)-P+F^\top RF+Q\prec 0
\end{equation}
holds for any $(A, B)\in\mathcal{C}$.
Multiplying both sides of the inequality \eqref{P} with $H=\gamma P^{-1}$, defining $L=FH$ and dividing by $\gamma$, we obtain
\begin{equation}\label{X}
    (AH+BL)^\top \!H^{-1}\!(AH+BL)\!-H+\frac{1}{\gamma}(L^\top \!R L+HQH)\!\prec\! 0.
\end{equation}
Using the Schur complement with $H\succ 0$ and defining $\Phi=\begin{bmatrix}M_R L\\ M_Q H\end{bmatrix}$, the inequality \eqref{X} is equivalent to
\begin{equation}\nonumber
    \begin{bmatrix}
        H-\frac{1}{\gamma}\Phi^\top \Phi& (AH+BL)^\top\\
        (AH+BL) &H
    \end{bmatrix}\succ 0.
\end{equation}
Using the Schur complement again, we obtain the equivalent inequalities
\begin{subequations}
\begin{align}
    &H-(AH+BL)(H\!-\frac{1}{\gamma}\Phi^\top\Phi)^{-1}\!(AH+BL)^\top\!\!\succ \!0,\label{theorem1:schur1}\\
    &H-\frac{1}{\gamma}\Phi^\top\Phi\succ 0.\label{theorem1:schur2}
\end{align}
\end{subequations}
The inequality \eqref{theorem1:schur1} is equivalent to
\begin{equation}\label{X1}
    \begin{bmatrix}
        I \\
        A^\top\\
        B^\top
    \end{bmatrix}^\top
    \!\!\!
    \begin{bmatrix}\!
        H \!&0\\
        0 \!&-\begin{bmatrix}H\\L\end{bmatrix}\!(H-\frac{1}{\gamma}\Phi^\top\Phi)^{-1}\!\begin{bmatrix}H\\L\end{bmatrix}^\top\!
    \end{bmatrix}\!\!\!
    \begin{bmatrix}
        I\\
        A^\top\\
        B^\top
    \end{bmatrix}\!\!\succ\! 0.
\end{equation}
As the set $\mathcal{C}$ is characterized by \eqref{C}, the inequality \eqref{X1} holds for any $(A, B)\in\mathcal{C}$ if there exists $\tau=(\tau_0, \ldots, \tau_{T-1})$, $\tau_i\geq 0$, $i\in\mathbb{I}_{[0, T-1]}$, such that the following inequality holds
\begin{equation}\label{M1}
    \begin{bmatrix}\!
        -H &0\\
        0 &\begin{bmatrix}H\\L\end{bmatrix}\!\!(H-\frac{1}{\gamma}\Phi^\top\Phi)^{-1}\!\!\begin{bmatrix}H\\L\end{bmatrix}^\top\!
    \end{bmatrix}+
\Pi(\tau)\prec 0.
\end{equation}
Applying again the Schur complement, \eqref{M1} together with \eqref{theorem1:schur2} is equivalent to
\begin{equation}\nonumber
    \begin{bmatrix}
        \begin{bmatrix}
            -H &0\\
            0 &0
        \end{bmatrix}+\Pi(\tau) &
        \begin{bmatrix}
            0\\
            H\\
            L
        \end{bmatrix}\\
        \begin{bmatrix}
            0 &H &L^\top
        \end{bmatrix} &-H+\frac{1}{\gamma}\Phi^\top\Phi
    \end{bmatrix}\prec 0.
\end{equation}
Using the Schur complement again, we obtain the LMI \eqref{sdp:nominal_no_con2}.
Based on the preceding discussion, if the LMIs \eqref{sdp:nominal_no_con2} and \eqref{sdp:nominal_no_con3} are satisfied, then $V$ satisfies the inequality \eqref{Vconstraint} for any $(A, B)\in\mathcal{C}$.

As we have shown earlier in \eqref{worstV}, if the function $V$ satisfies the inequality \eqref{Vconstraint} for any $(A, B)\in\mathcal{C}$, then $V(x_t)$ is an upper bound on the cost \eqref{mpc:nominal_noconstraint_obj}.
Given that \eqref{sdp:nominal_no_con1} holds, we have thus shown that $\gamma$ is an upper bound on the optimal cost of problem \eqref{mpc:nominal_noconstraint}.
\end{proof}

\begin{remark}
The proof of Theorem \ref{theorem1} bears similarities to the results presented in \cite{hguyen2023lmirobust}, where the authors aim to find a state-feedback control law that minimizes the upper bound of an infinite horizon cost functional.
However, it is important to note some key distinctions.
Specifically, in \cite{hguyen2023lmirobust}, the data used for designing the LMI-based data-driven controller are generated by the system without noise.
On the other hand, in the present scheme, the given offline data are affected by the noise satisfying instantaneous constraints.
Consequently, we need to employ different technical tools to characterize the set $\mathcal{C}$ and to derive the LMI constraints \eqref{sdp:nominal_no_con2}-\eqref{sdp:nominal_no_con3}.
\end{remark}

\begin{remark}
Solving the SDP problem \eqref{sdp:nominal_no} enables us to determine an upper bound for the objective of problem \eqref{mpc:nominal_noconstraint}.
The main source of conservatism of this upper bound is due to the linear state-feedback form of the input.
Additionally, the fact that the upper bound is assumed to be quadratic can increase the gap.
Reducing conservatism by considering more general state-feedback and cost upper bound functions is an interesting issue for future research.
\end{remark}

In the following lemma, we show that the sublevel set of the function $V(x)$ is robust positive invariant (RPI).
This result will be used to reformulate the input and state constraint in Theorem \ref{theorem2}.
\begin{mylem}\label{lemma:RPI}
Let $x_t$ be the state at time $t$.
Suppose there exist $\gamma, H, L, \tau$ such that \eqref{sdp:nominal_no_con1}-\eqref{sdp:nominal_no_con3} hold, denote the resulting state-feedback gain by $F=LH^{-1}$, and define $P=\gamma H^{-1}$.
The set $\mathcal{E}=\{x\in\mathbb{R}^n:x^\top P x\leq \gamma\}$ is an RPI set for the uncertain system $x_{t+1}=(A+BF)x_t$ with $(A, B)\in\mathcal{C}$, i.e., if
$x_t^\top P x_t\leq \gamma$,
then $x_{t+k}^\top P x_{t+k}\leq \gamma, \forall (A, B)\in\mathcal{C}, k\in\mathbb{I}_{[1, \infty)}$.
\end{mylem}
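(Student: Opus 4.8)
The plan is to extract the robust Lyapunov decrease condition already contained in the proof of Theorem~\ref{theorem1} and then conclude invariance by a short induction. First I would recall that, with $P=\gamma H^{-1}\succ 0$ and $F=LH^{-1}$, the LMIs \eqref{sdp:nominal_no_con2}--\eqref{sdp:nominal_no_con3} were shown in the proof of Theorem~\ref{theorem1} to be equivalent---through the successive Schur-complement steps and the quadratic-matrix-inequality characterization \eqref{C} of $\mathcal{C}$---to the statement that
\[
(A+BF)^\top P(A+BF)-P+F^\top RF+Q\prec 0
\]
holds for \emph{every} $(A,B)\in\mathcal{C}$. The point I would emphasize is that this is one and the same $P$ for all admissible $(A,B)$; this uniformity is exactly what makes $V(x)=x^\top Px$ a common Lyapunov function and is the heart of the robustness claim.

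Next I would convert this into a strict contraction of $V$. Since $R,Q\succ 0$, we have $F^\top RF+Q\succ 0$, so the displayed inequality gives $(A+BF)^\top P(A+BF)\prec P$ for all $(A,B)\in\mathcal{C}$. Evaluating the associated quadratic form at an arbitrary $x\in\mathbb{R}^n$ then yields
\[
V\big((A+BF)x\big)=x^\top (A+BF)^\top P(A+BF)x\leq x^\top Px=V(x),
\]
with strict inequality whenever $x\neq 0$, and this bound is valid simultaneously for every $(A,B)\in\mathcal{C}$.

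Finally I would prove the invariance of $\mathcal{E}=\{x:V(x)\leq\gamma\}$ by induction on $k$. The base case $x_t\in\mathcal{E}$ is the hypothesis $x_t^\top Px_t\leq\gamma$. For the inductive step, assume $x_{t+k}\in\mathcal{E}$; for any realization $(A,B)\in\mathcal{C}$ producing $x_{t+k+1}=(A+BF)x_{t+k}$, the contraction above gives $V(x_{t+k+1})\leq V(x_{t+k})\leq\gamma$, hence $x_{t+k+1}\in\mathcal{E}$. Because the contraction holds for \emph{all} $(A,B)\in\mathcal{C}$, the realization is allowed to differ from step to step, which is precisely what RPI against the uncertain/worst-case dynamics requires, and the induction closes for all $k\in\mathbb{I}_{[1,\infty)}$. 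I do not expect a genuine obstacle here: the only point that needs care is the one already stressed---that a single $P$ certifies the decrease against the entire set $\mathcal{C}$---after which the result is immediate.
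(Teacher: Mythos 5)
Your proposal is correct and takes essentially the same route as the paper: the paper likewise invokes the decrease condition established in the proof of Theorem~\ref{theorem1} (there in the form of inequality \eqref{Vconstraint}, which with $l\geq 0$ gives one-step non-increase of $x^\top Px$ for every $(A,B)\in\mathcal{C}$) and then closes by induction over $k$. One minor caveat: the Schur-complement/S-procedure chain in Theorem~\ref{theorem1} establishes that the LMIs \eqref{sdp:nominal_no_con2}--\eqref{sdp:nominal_no_con3} are \emph{sufficient} for $(A+BF)^\top P(A+BF)-P+F^\top RF+Q\prec 0$ on all of $\mathcal{C}$, not equivalent to it (the multiplier step is only sufficient), but since your argument uses only that direction the proof stands.
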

\begin{proof}
As proved in Theorem \ref{theorem1}, the inequality \eqref{Vconstraint} holds because the constraints \eqref{sdp:nominal_no_con2}-\eqref{sdp:nominal_no_con3} are satisfied.
From inequality \eqref{Vconstraint}, we have that
\begin{equation}
    \bar{x}_{k+1}(t)^\top P\bar{x}_{k+1}(t)-\bar{x}_{k}(t)^\top P\bar{x}_{k}(t)\leq -l(\bar{u}_k(t) \bar{x}_k(t))\nonumber
\end{equation}
holds for any $(A, B)\in\mathcal{C}, k\in\mathbb{N}$. Since $l(\bar{u}_k(t), \bar{x}_k(t))\geq 0$, we have
\begin{equation}
    \bar{x}_{k+1}(t)^\top P\bar{x}_{k+1}(t)\leq \bar{x}_{k}(t)^\top P\bar{x}_{k}(t).\label{lemma1:proof1}
\end{equation}
holds for any $(A, B)\in\mathcal{C}, k\in\mathbb{N}$.
Therefore, if $x_t^\top P x_t=\bar{x}_0(t)^\top P\bar{x}_0(t)\leq \gamma$, then we have \begin{equation}\nonumber
    x_{t+1}^\top Px_{t+1}\leq\max\limits_{(A, B)\in\mathcal{C}}\bar{x}_1(t)^\top P\bar{x}_1(t)\overset{\eqref{lemma1:proof1}}{\leq}\gamma.
\end{equation}
This argument can be continued for the state $x_{t+k}, \forall k\in\mathbb{I}_{[2, \infty)}$ by induction, completing the proof.
\end{proof}

Based on Lemma \ref{lemma:RPI}, the following theorem shows that the input and state constraints in \eqref{mpc:nominal_con3} and \eqref{mpc:nominal_con4} can be reformulated as LMIs.
\begin{mythm}\label{theorem2}
If \eqref{sdp:nominal_no} as well as \eqref{con:nominal} hold, then the input and state constraints \eqref{mpc:nominal_con3} and \eqref{mpc:nominal_con4} hold
\begin{subequations}\label{con:nominal}
\begin{align}
    &\begin{bmatrix}
        H & L^\top\\
        L &S_u^{-1}
    \end{bmatrix}\succeq 0,\label{con:nominal1}\\
    &\begin{bmatrix}
        S_x &I\\
        I &H
    \end{bmatrix}\succeq 0.\label{con:nominal2}
\end{align}
\end{subequations}
\end{mythm}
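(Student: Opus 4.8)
The plan is to invoke Lemma~\ref{lemma:RPI} to confine every relevant state to the invariant ellipsoid, and then to turn each of the two constraints into an ellipsoidal containment that a Schur complement renders as an LMI. Writing $P=\gamma H^{-1}$ and $F=LH^{-1}$ as in Theorem~\ref{theorem1}, the invariant set of Lemma~\ref{lemma:RPI} is
\begin{equation}\nonumber
\mathcal{E}=\{x:x^\top P x\leq\gamma\}=\{x:x^\top H^{-1}x\leq 1\}.
\end{equation}
A Schur complement applied to \eqref{sdp:nominal_no_con1} (using $H\succ 0$) gives $x_t^\top H^{-1}x_t\leq 1$, i.e.\ $x_t\in\mathcal{E}$. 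Hence, by Lemma~\ref{lemma:RPI}, every predicted and closed-loop state $\bar{x}_k(t)$ lies in $\mathcal{E}$ for all $(A,B)\in\mathcal{C}$ and all $k$. It therefore suffices to enforce the input and state constraints uniformly over the whole set $\mathcal{E}$, which automatically covers the worst-case quantifier over $\mathcal{C}$ appearing in \eqref{mpc:nominal_con3}--\eqref{mpc:nominal_con4}.

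For the input constraint I would observe that $\bar{u}_k(t)=F\bar{x}_k(t)$ with $\bar{x}_k(t)\in\mathcal{E}$, so \eqref{mpc:nominal_con3} holds provided $x^\top F^\top S_u F x\leq 1$ for every $x$ with $x^\top H^{-1}x\leq 1$. Substituting $y=H^{-1/2}x$ (equivalently, invoking the S-procedure for a single quadratic constraint) shows this is equivalent to the matrix inequality $F^\top S_u F\preceq H^{-1}$. Inserting $F=LH^{-1}$ and applying a congruence transformation by $H$ turns this into $L^\top S_u L\preceq H$, and a Schur complement with $S_u\succ 0$ yields exactly \eqref{con:nominal1}.

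The state constraint is handled analogously: by Lemma~\ref{lemma:RPI} it suffices that $x^\top S_x x\leq 1$ for every $x\in\mathcal{E}$, i.e.\ that the invariant ellipsoid is contained in the constraint ellipsoid, $\mathcal{E}\subseteq\{x:\|x\|_{S_x}\leq 1\}$. By the same eigenvalue/S-procedure argument this containment is equivalent to the matrix inequality $S_x\preceq H^{-1}$, which a Schur complement recasts as an LMI in $H$, giving \eqref{con:nominal2} and completing the argument.

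The step I expect to be the main obstacle is the state-constraint reformulation. Getting the inclusion direction right is delicate: one must verify that the Schur complement produces $S_x\preceq H^{-1}$ (invariant ellipsoid \emph{inside} the constraint ellipsoid) and not the reverse inequality, and the reformulation must accommodate $S_x$ being only positive semidefinite (possibly singular), so that $S_x^{-1}$ cannot be used directly in the block matrix. Once that direction is fixed, the input constraint is comparatively routine, since it reuses the congruence-plus-Schur pattern already established in the proof of Theorem~\ref{theorem1}.
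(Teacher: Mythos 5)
Your treatment of the invariance reduction and of the input constraint is correct and follows the same route as the paper: the Schur complement of \eqref{sdp:nominal_no_con1} gives $x_t\in\mathcal{E}$, Lemma~\ref{lemma:RPI} confines all predicted states to $\mathcal{E}$, and the chain $F^\top S_uF\preceq H^{-1}$, congruence with $H$, $L^\top S_uL\preceq H$, Schur complement with $S_u\succ 0$ reproduces \eqref{con:nominal1} exactly as in the paper (the paper phrases the middle step as an S-procedure with multiplier $\lambda$, separately arguing that $\lambda=1/\gamma$ can be chosen without loss of generality; your lossless single-constraint S-procedure argument is an equivalent shortcut).

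The genuine gap is in the last sentence of your state-constraint argument. Your containment condition $S_x\preceq H^{-1}$ is the right one --- it is exactly what $\mathcal{E}\subseteq\{x:\|x\|_{S_x}\leq 1\}$ requires --- but it is \emph{not} what \eqref{con:nominal2} expresses: taking the Schur complement of \eqref{con:nominal2} with respect to the block $H\succ 0$ yields $S_x-H^{-1}\succeq 0$, i.e.\ $S_x\succeq H^{-1}$, the reverse inequality, equivalent to the reverse containment $\{x:\|x\|_{S_x}\leq 1\}\subseteq\mathcal{E}$ (a scalar check: $H=1$, $S_x=4$ satisfies \eqref{con:nominal2} yet $x=1$ lies in $\mathcal{E}$ and violates $x^\top S_xx\leq 1$). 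So the equivalence you assert at the end is false, and no Schur complement of \eqref{con:nominal2} produces your condition. You flagged precisely this direction issue as the delicate point and then did not resolve it. Notably, the comparison with the paper shows the discrepancy is real: the paper's own proof claims at this spot that the containment ``holds if $S_x\succeq\gamma^{-1}P$,'' which is what \eqref{con:nominal2} encodes but has the inequality flipped relative to the correct sufficient condition $S_x\preceq\gamma^{-1}P$ that your derivation (and the same S-procedure used for the input) gives. A correct LMI in the decision variable $H$ that also accommodates a singular $S_x\succeq 0$ (your other stated concern) is obtained by factoring $S_x=M_{S_x}^\top M_{S_x}$ and imposing
\begin{equation}\nonumber
\begin{bmatrix} H & HM_{S_x}^\top\\ M_{S_x}H & I\end{bmatrix}\succeq 0,
\end{equation}
whose Schur complement gives $H-HS_xH\succeq 0$, i.e.\ after congruence with $H^{-1}$ the condition $S_x\preceq H^{-1}$ that your argument, and Theorem~\ref{theorem2}, actually need.
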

\begin{proof}
Suppose that a feasible solution of problem \eqref{sdp:nominal_no} is $\gamma, H, L, \tau$.
Defining $P=\gamma H^{-1}$ and $F=LH^{-1}$,
by constraint \eqref{sdp:nominal_no_con1}, we have $x_t\in\mathcal{E}=\{x\in\mathbb{R}^n:x^\top Px\leq \gamma\}$.
Given that the input is in a state-feedback form, we can write the input constraint \eqref{mpc:nominal_con3} as
\begin{equation}\nonumber
\begin{aligned}
       \max_{k\in\mathbb{I}_{\geq 0}} \|\bar{u}_k(t)\|_{S_u}^2&=\max_{k\in\mathbb{I}_{\geq 0}} \|F \bar{x}_k(t)\|_{S_u}^2\leq 1.
\end{aligned}
\end{equation}
By Lemma \ref{lemma:RPI}, the set $\mathcal{E}$ is an RPI set for the uncertain system $x_{t+1}=(A+BF)x_t$ with $(A, B)\in\mathcal{C}$.
Thus, the input constraint \eqref{mpc:nominal_con3} can be written as
\begin{equation}\label{theorem2:proof1}
\max\limits_{k\in\mathbb{I}_{\geq 0}} \|F \bar{x}_k(t)\|_{S_u}^2\leq\max\limits_{x\in\mathcal{E}}\|Fx\|_{S_u}^2\leq 1.
\end{equation}
The inequality \eqref{theorem2:proof1} holds if
\begin{equation}\nonumber
    \begin{bmatrix}
        x\\
        1
    \end{bmatrix}^\top
    \begin{bmatrix}
        -F^\top S_u F &0\\
        0 &1
    \end{bmatrix}
    \begin{bmatrix}
        x\\
        1
    \end{bmatrix}
    \geq 0,
\end{equation}
holds for all $x$ such that
\begin{equation}\nonumber
    \begin{bmatrix}
        x\\
        1
    \end{bmatrix}^\top
    \begin{bmatrix}
        -P &0\\
        0 &\gamma
    \end{bmatrix}
    \begin{bmatrix}
        x\\
        1
    \end{bmatrix}
    \geq 0.
\end{equation}
Using the S-procedure, if there exists $\lambda\geq 0$ such that
\begin{equation}\label{input:nominal_lmi1}
    \begin{bmatrix}
        -F^\top S_u F&0\\
        0&1
    \end{bmatrix}-\lambda
    \begin{bmatrix}
        -P &0\\
        0 &\gamma
    \end{bmatrix}\succeq 0,
\end{equation}
then the input constraint \eqref{mpc:nominal_con3} must be satisfied.
The inequality
\eqref{input:nominal_lmi1} holds iff the following inequalites hold
\begin{subequations}\label{input:nominal_lmi3}
    \begin{align}
        &\lambda P-F^\top S_u F\succeq 0,\label{input:nominal_lmi3a}\\
        &1-\lambda\gamma \geq 0.\label{input:nominal_lmi3b}
    \end{align}
\end{subequations}
If there exist $\gamma, P, F, \lambda$ such that \eqref{input:nominal_lmi3} hold and $\lambda<\frac{1}{\gamma}$, then there must exist $\lambda'=\frac{1}{\gamma}$ such that \eqref{input:nominal_lmi3} hold for $\gamma, P, F, \lambda'$.
Therefore, we can choose the multiplier to be $\lambda=\frac{1}{\gamma}$. Multiplying both sides of \eqref{input:nominal_lmi3a} with $H$, the inequality \eqref{input:nominal_lmi3a} is equivalent to
\begin{equation}\label{input:nominal_lmi4}
        H-L^\top S_u L\succeq 0.
\end{equation}
Using the Schur complement, the inequality \eqref{input:nominal_lmi4} is equivalent to
\eqref{con:nominal1}.
The state constraint \eqref{mpc:nominal_con4} can be written as
\begin{equation}\nonumber
\begin{aligned}
       \max_{k\in\mathbb{I}_{\geq 0}} \|\bar{x}_k(t)\|_{S_x}^2\leq \max_{x\in\mathcal{E}}\|x\|_{S_x}^2\leq 1.
\end{aligned}
\end{equation}
Thus, the state constraint \eqref{mpc:nominal_con4} holds if $x^\top S_x x\leq 1$ holds for all $x$ such that $x^\top P x\leq\gamma$.
The statement holds if $S_x\succeq \gamma^{-1} P$.
Using the Schur complement, $S_x\succeq \gamma^{-1} P$ is equivalent to \eqref{con:nominal2}.
\end{proof}

\subsection{Receding-horizon algorithm}\label{sec:3.3}
We now combine the optimization problem \eqref{sdp:nominal_no} with the constraints \eqref{con:nominal} to present the proposed data-driven min-max MPC approach.
Given an initial state $x_t$, the state-feedback gain $F_t\in\mathbb{R}^{m\times n}$ that minimizes the derived upper bound on the optimal cost of  the min-max MPC problem \eqref{mpc:nominal} can be obtained by solving the following optimization problem
\begin{subequations}\label{sdp:nominal}
\begin{align}
    &\minimize\limits_{\gamma, H, L, \tau}\gamma\label{sdp:nominal_obj}\\
    \text{s.t. }
    & \eqref{sdp:nominal_no} \text{ and } \eqref{con:nominal}\text{ hold}.
\end{align}
\end{subequations}
We denote the optimal solution of the problem \eqref{sdp:nominal} by $\gamma^\star, H^\star, L^\star, \tau^\star$.
The corresponding optimal state-feedback gain is given by $F=L^\star (H^\star)^{-1}$.

The data-driven min-max MPC problem \eqref{mpc:nominal} is solved in a receding horizon manner, see Algorithm~1.
At time $t$, we solve the optimization problem \eqref{sdp:nominal} and obtain the optimal state-feedback gain $F_t$. Only the first computed input $u_t=F_tx_t$ is implemented.
At the next sampling time $t+1$, we re-iterate the described procedure.

\begin{algorithm}[!ht]
    \SetKwData{Left}{left}
    \SetKwData{Up}{up}
    \SetKwFunction{FindCompress}{FindCompress}
    \SetKwInOut{Input}{input}
    \SetKwInOut{Output}{output}
    \nl At time $t=0$, measure state $x_0$\;
    \nl Solve the problem \eqref{sdp:nominal}\;
    \nl Apply the input $u_t=F_tx_t$\;
    \nl Set $t=t+1$ and go back to 2\;
    \label{algorithm:nominal}
    \caption{Data-driven min-max MPC.}
\end{algorithm}

\subsection{Closed-loop guarantees}\label{sec:3.4}
Now we prove recursive feasibility, constraint satisfaction and exponential stability of the closed-loop system $x_{t+1}=A_sx_t+B_su_t$ resulting from the proposed scheme.

\begin{mythm}\label{theorem3}
If the optimization problem \eqref{sdp:nominal} is feasible at time $t=0$, then
\begin{enumerate}[(i)]
\item it is feasible at any time $t\in\mathbb{N}$,
\item the closed-loop trajectory satisfies the constraints, i.e., $\|u_t\|_{S_u}\leq 1, \|x_t\|_{S_x}\leq 1$ for all $t\in\mathbb{N}$;
\item the desired equilibrium $x^s=0$ is exponentially stable for the closed-loop system.
\end{enumerate}
\end{mythm}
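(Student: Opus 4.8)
The plan is to prove the three items in order, with (i) serving as the basis for (ii) and (iii), and to use the optimal value $\gamma_t^\star$ of \eqref{sdp:nominal} as a Lyapunov function for (iii). For (i), I would fix an optimizer $(\gamma_t^\star,H_t^\star,L_t^\star,\tau_t^\star)$ at time $t$, set $P_t=\gamma_t^\star(H_t^\star)^{-1}$, $F_t=L_t^\star(H_t^\star)^{-1}$, and apply $u_t=F_tx_t$. Since the true plant satisfies $(A_s,B_s)\in\mathcal{C}$, Lemma \ref{lemma:RPI} yields $x_{t+1}^\top P_tx_{t+1}\le\gamma_t^\star$, i.e.\ $x_{t+1}^\top(H_t^\star)^{-1}x_{t+1}\le1$. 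The constraints \eqref{sdp:nominal_no_con2}, \eqref{sdp:nominal_no_con3}, and \eqref{con:nominal} do not involve the current state, so they still hold, while \eqref{sdp:nominal_no_con1} at time $t+1$ is, by a Schur complement, precisely the inequality just obtained. Hence the same tuple is feasible at $t+1$, and (i) follows by induction.

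For (ii), item (i) guarantees that \eqref{sdp:nominal} is feasible at every $t$, so Theorem \ref{theorem2} applies at each step. Because $x_t\in\mathcal{E}_t:=\{x:x^\top P_tx\le\gamma_t^\star\}$ by \eqref{sdp:nominal_no_con1} and the realized input is $u_t=F_tx_t$, the bounds $\max_{x\in\mathcal{E}_t}\|x\|_{S_x}\le1$ and $\max_{x\in\mathcal{E}_t}\|F_tx\|_{S_u}\le1$ proved in Theorem \ref{theorem2} give $\|x_t\|_{S_x}\le1$ and $\|u_t\|_{S_u}\le1$ for all $t$, which is \eqref{mpc:nominal_con3}--\eqref{mpc:nominal_con4} along the closed loop.

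Item (iii) is the main work, and I would take $\gamma_t^\star$ as a Lyapunov function. The crucial observation is that the block LMI \eqref{sdp:nominal_no_con2} is homogeneous of degree one under $(\gamma,H,L,\tau)\mapsto(\alpha\gamma,\alpha H,\alpha L,\alpha\tau)$: every block scales by $\alpha$, so the scaled tuple again satisfies \eqref{sdp:nominal_no_con2}--\eqref{sdp:nominal_no_con3} for all $\alpha>0$, while for $\alpha\le1$ the underlying ellipsoid only shrinks and hence still obeys \eqref{con:nominal} (cf.\ Theorem \ref{theorem2}). Taking $\alpha=(x_{t+1}^\top P_tx_{t+1})/\gamma_t^\star\le1$ makes \eqref{sdp:nominal_no_con1} tight at $x_{t+1}$, so the scaled tuple is feasible at $t+1$ with objective $x_{t+1}^\top P_tx_{t+1}$; therefore $\gamma_{t+1}^\star\le x_{t+1}^\top P_tx_{t+1}$. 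Combining this with \eqref{Vconstraint} for the true plant and \eqref{sdp:nominal_no_con1} gives
\[
\gamma_{t+1}^\star\le x_{t+1}^\top P_tx_{t+1}\le x_t^\top P_tx_t-l(u_t,x_t)\le\gamma_t^\star-l(u_t,x_t).
\]
The same chain, together with $\gamma_{t+1}^\star\ge0$, yields the lower bound $\gamma_t^\star\ge l(u_t,x_t)\ge\lambda_{\min}(Q)\|x_t\|^2$; moreover $\gamma_t^\star\le\gamma_0^\star$ confines the closed loop to a fixed compact set, on which scaling a fixed feasible tuple down to meet \eqref{sdp:nominal_no_con1} (and using that \eqref{con:nominal} is inactive near the origin) furnishes a matching upper bound $\gamma_t^\star\le c_2\|x_t\|^2$. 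A standard Lyapunov comparison then gives geometric decay of $\gamma_t^\star$, and hence $\|x_t\|\le C\rho^t$ with $\rho\in[0,1)$, i.e.\ exponential stability.

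I expect the stability step (iii) to be the main obstacle. Its core is the cross-time inequality $\gamma_{t+1}^\star\le x_{t+1}^\top P_tx_{t+1}$, which links consecutive optimal values and rests on recognizing the degree-one homogeneity of \eqref{sdp:nominal_no_con2} and checking that \eqref{con:nominal} survives downward scaling; the second delicacy is the matching quadratic upper bound on $\gamma_t^\star$, where the state-dependent constraint \eqref{sdp:nominal_no_con1} must be handled through the same scaling construction. Everything else reduces to Lemma \ref{lemma:RPI}, Theorem \ref{theorem2}, and routine manipulations.
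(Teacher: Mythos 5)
Your items (i) and (ii) coincide with the paper's Parts I and II: only \eqref{sdp:nominal_no_con1} depends on the measured state, Lemma \ref{lemma:RPI} with $(A_s,B_s)\in\mathcal{C}$ gives $x_{t+1}^\top P_t x_{t+1}\le\gamma_t^\star$, and closed-loop constraint satisfaction follows from Theorem \ref{theorem2} because the true plant lies in $\mathcal{C}$. For (iii), however, you take a genuinely different and in fact more careful route. The paper works with $V(x_t)=x_t^\top P_t x_t$ and asserts the cross-time inequality $x_{t+1}^\top P_{t+1}x_{t+1}\le x_{t+1}^\top P_t x_{t+1}$ merely ``since $P_t$ is a suboptimal solution at time $t+1$'' --- a step that does not literally follow, because the SDP minimizes $\gamma$ rather than $x_{t+1}^\top P x_{t+1}$, so feasibility of the old solution only yields $\gamma_{t+1}^\star\le\gamma_t^\star$ (this terseness is inherited from the model-based scheme of Kothare et al.); the paper's quadratic upper bound $V(x_t)\le x_t^\top P_0x_t$ rests on the same unproved pointwise comparison. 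Your degree-one homogeneity argument --- under $(\gamma,H,L,\tau)\mapsto(\alpha\gamma,\alpha H,\alpha L,\alpha\tau)$ every block of \eqref{sdp:nominal_no_con2} scales by $\alpha$ while $P=\gamma H^{-1}$ and $F=LH^{-1}$ are invariant, and $\alpha=x_{t+1}^\top P_tx_{t+1}/\gamma_t^\star$ makes \eqref{sdp:nominal_no_con1} tight --- is precisely the rigorous patch: it proves $\gamma_{t+1}^\star\le x_{t+1}^\top P_tx_{t+1}$, hence $\gamma_{t+1}^\star\le\gamma_t^\star-l(u_t,x_t)$ via \eqref{Vconstraint}, and using the optimal value $\gamma_t^\star$ as Lyapunov function replaces the paper's loose upper-bound step by your scaled-tuple construction. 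In short, your (iii) proves what the paper's Part III asserts; the paper buys brevity, you buy soundness.

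Two caveats. First, your claim that \eqref{con:nominal} survives downward scaling holds for \eqref{con:nominal1} (Schur: $\alpha H-\alpha^2L^\top S_uL=\alpha(H-\alpha L^\top S_uL)\succeq0$ for $\alpha\in(0,1]$), but it fails for \eqref{con:nominal2} exactly as printed: that LMI is equivalent to $S_x\succeq H^{-1}$, which becomes \emph{harder} as $H$ shrinks (one would need $S_x\succeq\alpha^{-1}H^{-1}$). Note, though, that the containment actually needed in the proof of Theorem \ref{theorem2}, namely $\mathcal{E}\subseteq\{x:\|x\|_{S_x}\le1\}$, corresponds to $S_x\preceq H^{-1}$, so the printed \eqref{con:nominal2} appears to have its Schur direction flipped; for the corrected constraint your geometric ``shrinking the ellipsoid preserves feasibility'' reasoning is valid, but you should flag this discrepancy rather than invoke \eqref{con:nominal2} verbatim. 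Second, some finishing details: the degenerate case $x_{t+1}=0$ (where $\alpha=0$ destroys $H\succ0$; fix by a limit $\alpha'\downarrow\alpha$ or by noting the chain is trivial there), and the requirement $\alpha\le1$ in your quadratic upper bound, which holds only where $x_t^\top P_0x_t\le\gamma_0^\star$. The latter is easily completed: the compact confinement comes from your lower bound, $\lambda_{\min}(Q)\|x_t\|^2\le\gamma_t^\star\le\gamma_0^\star$ (not from invariance of $\mathcal{E}_0$, which is lost under the time-varying feedback), and with $r^2=\gamma_0^\star/\lambda_{\max}(P_0)$ one gets $\gamma_t^\star\le\lambda_{\max}(P_0)\|x_t\|^2$ both for $\|x_t\|\le r$ (your scaling) and for $\|x_t\|>r$ (since $\gamma_t^\star\le\gamma_0^\star=\lambda_{\max}(P_0)r^2$), so $c_2=\lambda_{\max}(P_0)$ works everywhere along the closed loop. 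These are completable technicalities, not gaps in the approach.
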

\begin{proof}
The proof is composed of three parts. Part I, II and III prove feasibility, constraint satisfaction and exponential stability, respectively.

\textbf{Part I:}  Let us assume that the optimization problem \eqref{sdp:nominal}  is feasible at sampling time $t$.
The only constraint in the problem that depends explicitly on the measured data $x_t$ is the inequality \eqref{sdp:nominal_no_con1}.
Constraints \eqref{sdp:nominal_no_con2}-\eqref{sdp:nominal_no_con3} and \eqref{con:nominal} are still satisfied with the feasible solution of the problem \eqref{sdp:nominal} at time $t$.
We need to prove that the inequality \eqref{sdp:nominal_no_con1} is feasible with this candidate solution for the future states $x_{t+1}$.
Suppose $\gamma, L$ and $H$ are a feasible solution of the problem \eqref{sdp:nominal} at time $t$. We define $P=\gamma H^{-1}$ and $F=LH^{-1}$.
The feasibility of the problem at time $t$ implies satisfaction of \eqref{sdp:nominal_no_con1}, which implies $x_t^\top Px_t\leq \gamma$.
Using Lemma 1, this implies that for any $(A, B)\in\mathcal{C}$, we have
\[x_t^\top (A+BF)^\top P(A+BF) x_t\leq \gamma.\]
The state at $t+1$ is $x_{t+1}=(A_s+B_sF)x_t$, where the true system matrices $(A_s, B_s)\in\mathcal{C}$.
Thus, we have
\[x_{t+1}^\top Px_{t+1}\leq \gamma.\]
Thus, the feasible solution of the optimization problem at time $t$ is also feasible at time $t+1$. This argument can be continued for time $t+2, t+3, \ldots$, which proves recursive feasibility.

\textbf{Part II:} As shown in Theorem \ref{theorem2}, if the inequalities \eqref{con:nominal} hold,
then the predicted input and state for the uncertain system $x_{t+1}=(A+BF)x_t$ with $(A, B)\in\mathcal{C}$ satisfy the input and state constraints.
Since the true system matrices $(A_s, B_s)$ lie in the set $\mathcal{C}$, the closed-loop input and state also satisfy the input and state constrains.

\textbf{Part III:} Let us denote the optimal solution of problem \eqref{sdp:nominal} at time $t$ by $H_t, \gamma_t$ and define $P_t=\gamma_t H_t^{-1}$.
Since $P_t$ is a suboptimal solution at time $t+1$, compare Part I of the proof,
we must have $x_{t+1}^\top P_{t+1}x_{t+1}\leq x_{t+1}^\top P_t x_{t+1}$.
The state at time $t+1$ is $x_{t+1}=(A_s+B_sF_t)x_t$ with $(A_s, B_s)\in\mathcal{C}$.

We choose $V(x_t)=x_{t+1}^\top P_{t+1}x_{t+1}$ as the Lyapunov function candidate.
A lower bound on $V(x_t)$ follows from the definition
$V(x_t)\geq l(u_t, x_t)\geq \|x_t\|_Q^2$.
An upper bound on $V(x_t)$ can be constructed from  $x_{t+1}^\top P_{t+1}x_{t+1}\leq x_{t+1}^\top P_t x_{t+1}$, which holds for $t\in\mathbb{N}$. This implies that $V(x_t)\leq x_t^\top P_0 x_t$.

According to the inequality \eqref{Vconstraint}, we must have
$x_{t+1}^\top P_t x_{t+1}-x_t^\top P_t x_t\leq-l(u_t, x_t)$.
Choosing $c$ as the minimum eigenvalue of $Q$, we have $l(u_t, x_t)\geq c\|x_t\|^2$.
Therefore, we can get
\[x_{t+1}^\top P_{t+1}x_{t+1}-x_t^\top P_t x_t\leq -c\|x_t\|^2.\]
This shows that $x^s=0$ is exponentially stable for the closed loop.
\end{proof}

The proof of feasibility relies on Lemma \ref{lemma:RPI}, which establishes that the closed-loop state remains contained within the set $\mathcal{E}$.
By choosing the optimal solution as a feasible solution at the next time step, we can guarantee the feasibility of problem \eqref{sdp:nominal}.
We select the Lyapunov function as $V(x)=x^\top P x$.
Using the inequality \eqref{Vconstraint}, we have that the decrease rate of the Lyapunov function is less than or equal to the negative of the stage cost.
This deduction allows us to establish exponential stability for the closed-loop system.

\begin{remark}
Existing data-driven MPC schemes based on the Fundamental Lemma \cite{coulson2019data,berberich2021guarantees} suffer from the limitation that the computational complexity increases with data length.
In contrast, the proposed approach offers larger flexibility by trading off computational tractability and conservatism. While the computational complexity of Problem \eqref{sdp:nominal} also grows with the data length via the size of $\tau$,
it is also possible to use different choices of multipliers which reduce the computational complexity at the cost of additional conservatism, see \cite[Section V.B]{berberich2023combining}. For example, when imposing the additional condition $\tau_i=\tau, \forall i \in\mathbb{I}_{[1, T]}$ for some $\tau\geq0$, the size of problem \eqref{sdp:nominal} is independent of the data length.
\end{remark}

\section{Simulation}\label{sec:4}
In this section, we apply the proposed data-driven min-max MPC scheme to the linearization of the nonlinear continuous stirred-tank reactor (CSTR) consider in \cite{mayne2011tube}.
The nonlinear system from \cite{mayne2011tube} is linearized at $\begin{bmatrix}0.9831\\0.3918\end{bmatrix}$ with a sampling time $0.5$ sec.
The linearized system is given by
\begin{equation}\label{system_numerical}
x_{t+1}=
\begin{bmatrix}
    0.9749 &-0.0135\\
    0.0004 &0.9888
\end{bmatrix}x_t+
10^{-4}\cdot\begin{bmatrix}
    0.041\\
    5.934
\end{bmatrix}u_t.
\end{equation}
The system matrices are unknown, but an input-state trajectory $(U_{-}, X)$ of length $T=200$ with additive noise $\omega_t$ is available, where the input $u\in U_{-}$ is chosen uniformly from the unit interval $[-10, 10]$.
The noise satisfies the instantaneous constraint in Assumption~\ref{assumption1}, i.e., $\omega_t\in\{\omega\in\mathbb{R}^2:\|\omega\|_2^2\leq 10^{-6}\}$.
This input-state sequence is used to characterize the set $\mathcal{C}$ in the proposed data-driven min-max MPC scheme.

In the data-driven min-max MPC problem, the input and state constraints are given by $\|u_t\|_{S_u}\leq 1$ and $\|x_t\|_{S_x}\leq 1$, where
\begin{equation}\nonumber
    S_u=0.01, S_x=\begin{bmatrix}1000 &0\\ 0 &500 \end{bmatrix}.
\end{equation}
We choose two different weighting matrices of the stage cost function to compare its influence on the closed-loop, i.e., $Q=I, R=1$ and $Q=I, R=10^{-4}$.
The initial state is given as $x_0=[-0.01,-0.04]^\top$.

Figure~\ref{simulation_result} illustrates the closed-loop input-state trajectory resulting from the application of the data-driven min-max MPC scheme.
The closed-loop trajectory converges to the origin, and the input and state constraints are satisfied in the closed-loop operation.
When choosing a smaller weighting matrix $R$, i.e., $R=10^{-4}$, the inputs obtained are larger compared to the case when $R=1$.
However, the inputs remain below the constraint bound of $10$.
This can be attributed to two factors.
First, we constrain the input to a state-feedback form. Second, we reformulate the input constraint using the S-procedure, which can introduce conservatism into the constraint.

In addition, we implement the data-driven min-max MPC scheme in the system \eqref{system_numerical} with online noise
$\omega_t\in\{\omega\in\mathbb{R}^2:\|\omega\|_2^2\leq 10^{-6}\}$, i.e., noise affecting the system dynamics during closed-loop operation.
The weight matrices are $Q=I$ and $R=10^{-4}$.
All other aspects, such as constraints and the initial state, remain consistent with the noise-free case.
The results indicate that the input and state constraints are still satisfied during the closed-loop operation. However, the closed-loop trajectory does not converge to the origin but approaches a neighborhood of the origin.
We calculate the sum of stage costs over all $300$ iterations, resulting in a value of $0.0411$, compared to a cost of $0.0369$ for the noise-free closed-loop.
This confirms that the proposed data-driven min-max MPC scheme also produces reliable results in the presence of online noise.
\begin{figure}
    \centering
    \subfigure[]{\label{state_result}
    \includegraphics[width=0.5\textwidth]{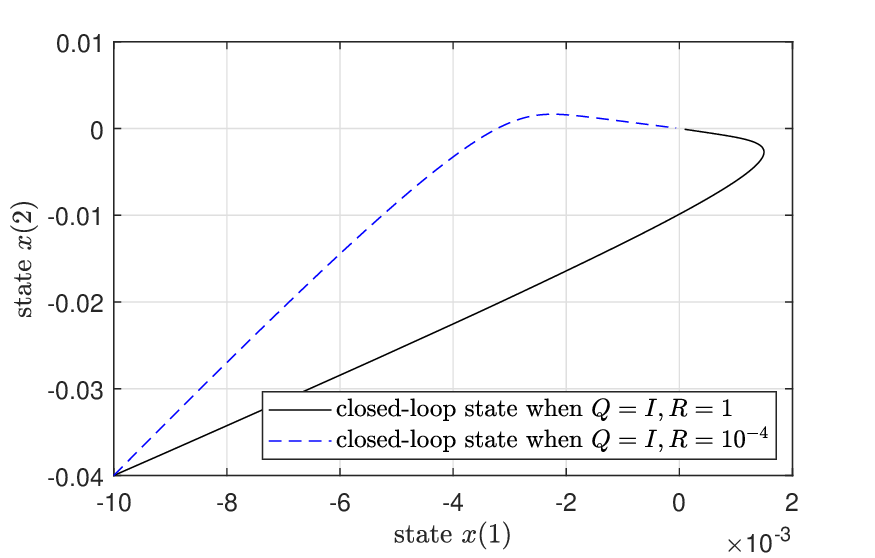}}
    \subfigure[]{\label{input_result}
    \includegraphics[width=0.5\textwidth]{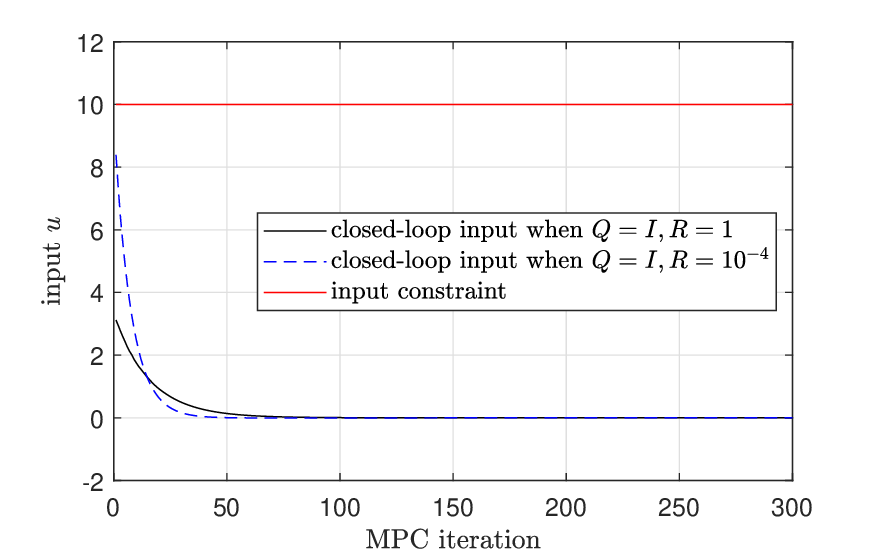}}
    \caption{Closed-loop input and state when choosing different weighting matrices. (a) Closed-loop state $x$. (b) Closed-loop input $u$.}
    \vspace{-0.7cm}
    \label{simulation_result}
\end{figure}

\section{Conclusion}\label{sec:5}
In this paper, we presented a data-driven min-max MPC scheme which uses noisy data to design controller for LTI systems with unknown system matrices.
Our main contribution is to reformulate the data-driven min-max MPC problem as an SDP and to establish that the proposed scheme guarantees closed-loop properties including recursive feasibility, constraint satisfaction and exponential stability.
A numerical example showed the effectiveness of the proposed method.
Compared with the existing min-max MPC scheme in  \cite{kothare1996robust, bemporad2003minmax, scokaert1998minmaxfeedback}, our scheme does not require any priori model knowledge.
Extending the results in this paper to robust data-driven min-max MPC for systems with online noise is an interesting future topic.

\bibliographystyle{plain}
\bibliography{main}
	
\end{document}